\newcommand{\MVA}{{\sc MVA}\xspace}
\newcommand{\MoM}{{\sc MoM}\xspace}
\newtheorem{theorem}{Theorem}
\newtheorem{corollary}{Corollary}
\begin{document}
\title{The Multi-Branched Method of Moments for Queueing Networks}
\author{Giuliano~Casale\\
\em SAP Research\\
\em TEIC Building, Shore Road\\
Newtownabbey, BT37 0QB, UK \\
\em giuliano.casale@ieee.com
}
\date{}
\maketitle
\begin{abstract}
We propose a new exact solution algorithm for closed multiclass product-form queueing networks that is several orders of magnitude faster and less memory consuming than established methods for multiclass models, such as the Mean Value Analysis (\MVA) algorithm. The technique is an important generalization of the recently proposed Method of Moments (\MoM) which, differently from \MVA, recursively computes {higher-order} moments of queue-lengths instead of mean values.

The main contribution of this paper is to prove that the information used in the \MoM recursion can be increased by considering multiple recursive branches that evaluate models with different number of queues. This reformulation allows to formulate a simpler matrix difference equation which leads to large computational savings with respect to the original \MoM recursion. Computational analysis shows several cases where the proposed algorithm is between $1,000$ and $10,000$ times faster and less memory consuming than the original \MoM, thus extending the range of multiclass models where exact solutions are feasible.
\end{abstract}

\section{Introduction}
Product-form queueing networks \cite{BasCMP75} are popular stochastic models used in capacity planning of computer architectures and networks with the purpose of evaluating the effect of resource sharing on scalability. In many applications, notably modern multi-tier architectures hosting web sites and intranet applications, workloads are best described as multiclass, that is, requests are assigned to different categories according to the statistical characteristics of their demand at the different servers. Yet, multiclass workloads are extremely challenging to analyze in queueing networks even using state-of-the-art solution techniques such as Mean Value Analysis (MVA) \cite{ReiL80}, the Convolution Algorithm \cite{Buz73,ReiK75b}, RECAL \cite{ConG86}, LBANC \cite{ChaS80}, or more recent methods based on the generating function approach \cite{HarC02,BerM93,ChoLW95}. The main problem is that multiclass models typically involve at least four or five classes, hundreds or thousands of competing requests, and many servers. Yet, established exact solution methods require computational costs which are prohibitive for models of this size, e.g., memory requirements are usually of the order of many terabytes or more. As a result, multiclass networks cannot be usually solved with exact techniques and the focus is on approximation methods\cite{Bar79,Sch79,ChaN82,CreSS02}, which yet cannot return probabilistic measures because they ignore the normalizing constant of the Markov chain underlying the queueing network.

Recently, we have proposed the Method of Moments (MoM) \cite{Cas06b,Cas09}, a new exact technique for multiclass models that recursively computes higher-order moments of queue-length instead of mean values like the MVA approach. The MoM approach is based on normalizing constants, thus it can also compute probabilistic measures that cannot be evaluated by the MVA algorithm. More importantly, the higher-order moments approach is much more scalable that the MVA approach, since the computational costs increase at most log-quadratically with the total population in the network, whereas they grow exponentially with the number of queues or classes in existing methods such as MVA, RECAL, or LBANC. Although much more efficient than MVA, the MoM approach becomes infeasible if the number of queue and classes grows simultaneously \cite{Cas06b}, thus models with many classes and many queues can be hard to analyze even with MoM. In order to address this limitation, we propose in this paper a generalization of MoM. The proposed approach is always more efficient that the original MoM in all cases, yet the largest improvements are achieved on models with several queues and many classes which are infeasible in the original MoM.

Our idea consists in integrating the recursive equation used in the Convolution Algorithm \cite{Buz73,ReiK75b} within the MoM approach, which jointly considers in a linear matrix difference equation the exact recursive formulas for normalizing constants used in RECAL \cite{ConG86} and LBANC \cite{ChaS80}, but not those used in Convolution. By integrating a new formula in the MoM matrix difference equation we obtain a new computational scheme which evaluates higher-order moments of queue-lengths on models with different populations and, as a result of the generalization, also on models with different number of queues. The main advantage of this approach is that the size of the matrix recurrence equation solved at each step of the recursion is much smaller that the one used in the original MoM approach. This is a fundamental improvement since linear system solution required to solve the matrix recurrence grows quadratically or cubically with the coefficient matrix order. In particular, we show that even using a multi-branched recursion on hundreds or thousands of models with different number of queues, the generalized MoM is much more efficient than the original MoM which does not consider models with different number of queues.

The remainder of this paper is organized as follows. After giving background in Section~\ref{SEC:background}, we use in Section~\ref{SEC:motivating} a simple multiclass model to illustrate MoM and the principles of the generalization proposed in this paper. The analysis of the effects of the multi-branched recursion on models with different number of queues is derived in Section~\ref{SEC:dncmom}, where we give in Theorem~\ref{THM:mainM} and Theorem~\ref{THM:main1} the main theoretical results of this paper. Computational complexity of the resulting algorithm is analyzed in Section~\ref{SEC:complexity}. Finally, Section~\ref{SEC:conclusions} gives conclusions and outlines possible extensions of this paper.

\section{Background}
\label{SEC:background}
We consider a closed product-form queueing network with $M$ distinct queues and $R$ service classes. Jobs are routed probabilistically through the queues where they receive service; after completing service, all jobs re-enter the network with a delay of $Z_r$ units of time which depends on the request's service class $r=1,\ldots,R$. The mean service demand, i.e., the mean service time multiplied by the mean number of visits\cite{DenB78}, of class-$r$ jobs at queue $k$ is indicated with $D_{k,r}$. The number of jobs of class $r$ is the integer $N_r$; we define $\vec N=(N_1,N_2,\ldots,N_R)$ as the population vector of the model and $N=N_1+N_2+\ldots+N_R$ is the total number of jobs circulating in the network.

We consider the computation of mean performance indices such as the mean throughput $X_r(\vec N)$ and the mean response time $R_{r}(\vec N)=N_r/X_r(\vec N)$ of class-$r$ jobs; additionally, for each queue $k$ and class $r$, we are interested in computing the utilization $U_{k,r}(\vec N)=D_{k,r}X_r(\vec N)$, the mean queue-length $Q_{k,r}(\vec N)$, and the mean residence times $R_{k,r}(\vec N)=Q_{k,r}(\vec N)/X_r(\vec N)$. These quantities are uniquely determined if one knows how to compute efficiently throughput and mean queue-lengths, which are given by the following ratios \cite{Lam83}:
\begin{align}
\label{EQU:indexes}
X_r(\vec N)=&\frac{G(\vec m, \vec N-\vec 1_r)}{G(\vec m, \vec N)},\\
Q_{k,r}(\vec N)=&\frac{D_{k,r}G(\vec m+\vec 1_k, \vec N-\vec 1_r)}{G(\vec m, \vec N)},
\end{align}
where $G(\vec m, \vec N)$ denotes the normalizing constant of the equilibrium state probabilities of the Markov chain underlying the queueing network\cite{GorN67}, $\vec 1_l$ indicates a vector composed by all zeros except for a one in the $l$th position, and $\vec m\equiv (m_1,m_2,\ldots,m_M)$ is the multiplicity vector such that the multiplicity $m_k$ is the number of queues in the model with identical service demands $D_{k,1},D_{k,2},\ldots,D_{k,R}$. According to these definitions, e.g., $G(\vec m+\vec 1_k, \vec N-\vec 1_r)$ represents the normalizing constant of a model augmented with an additional copy of queue $k$ and with a job of class $r$ removed. Because of the presence of replicated stations, the total number of queues in the model is $M_{tot}=\sum_{k=1}^M m_k$, among which only $M$ have distinct demands.

The advantage of working with normalizing constants instead of mean values is that $G(\vec m, \vec N)$ enables the computation of probabilistic measures that provide fine-grain information about the equilibrium state of the network. For instance, for the case $\vec m=(1,1,\ldots,1)$ where all queues are distinct, the equilibrium state probabilities can be computed as
\begin{equation}
\label{EQU:prob}
\Pr(\vec n_1, \vec n_2, \ldots, \vec n_M)=\frac{\prod_{k=1}^M C(\vec n_k) \prod_{r=1}^R D_{k,r}^{n_{k,r}}}{G(\vec m, \vec N)},
\end{equation}
where $\vec n_k=(n_{k,1},n_{k,2},\ldots,n_{k,R})$, being $n_{k,r}$ the number of class-$r$ jobs in queue $k$ in the considered state, $C(\vec n_k)=(\prod_r n_{k,r}!)/n_k!$, and $n_k=n_{k,1}+n_{k,2}+\ldots+n_{k,R}$. Note that quantities like (\ref{EQU:prob}) cannot be computed neither by the MVA algorithm nor by local iterative approximations \cite{Bar79,Sch79,ChaN82,CreSS02}, thus the normalization constant approach considered in this paper is inherently more general that these methods.

\subsection{Computational Solution}
The analysis of queueing networks can be performed efficiently either by approaches that directly evaluate mean queue-lengths and throughputs in a recursive fashion, such as the Mean Value Analysis (MVA) algorithm \cite{ReiL80}, or by computational methods for the normalizing constants in (\ref{EQU:indexes}), see \cite{BruB80}. The normalizing constant approach is usually slightly more efficient, although it can suffer numerical issues that do not apply to the mean value approach \cite{Lam82}. From a probabilistic perspective, the MVA algorithm and some methods for the normalizing constant, such as the LBANC algorithm \cite{ChaS80}, can be interpreted as a recursive evaluation of {\em mean} queue-lengths\footnote{For normalizing constant methods such as LBANC, the computation focuses on \emph{un-normalized} mean queue-lengths \cite{Lam83}.} over models with different population sizes. Yet, we have recently noted in \cite{Cas06b,Cas09} that recursively evaluating a set of {\em higher-order moments} of queue-lengths can be much more efficient computationally than computing mean values, while still returning the exact solution of the model. The {Method of Moments} (MoM) \cite{Cas06b} is an algorithm that implements this higher-order moment approach and that we generalize for increased efficiency in the next sections; thus we give here a brief overview of the method. Due to limited space and thanks to wide availability of material on the subject, we point to the literature for MVA \cite{ReiL80}, LBANC \cite{ChaS80}, RECAL \cite{ConG86}, and Convolution \cite{Buz73,ReiK75b}; comparative analyses can be found in \cite{Cas06b,BolGMT98}.

\subsubsection{Method of Moments (MoM)}
MoM computes the normalizing constant by simultaneously considering into a linear system of equations the following exact formulas for normalizing constants: the {\em convolution expression} (CE) \cite{ChaS80,Lam83}
\begin{equation}
\label{EQU:ce} G(\vec m+\vec 1_k,\vec N)=G(\vec m, \vec N)+\sum_{r=1}^{R} D_{k,r} G(\vec m+\vec 1_k, \vec N-\vec 1_r)
\end{equation}
for all $1\leq k\leq M$,  and the {\em population constraint} (PC) \cite{ConG86, Cas06b}
\begin{multline}
\label{EQU:pc} N_r G(\vec m, \vec N)=Z_{r} G(\vec m, \vec N-\vec 1_r)\\+\sum_{k=1}^{M} m_kD_{k,r} G(\vec m+\vec 1_k, \vec N-\vec 1_r),
\end{multline}
for all $1\leq r\leq R$, which are also the fundamental recurrence relations employed in the LBANC and RECAL algorithms. These recursions are subject to the following termination conditions: (i) $G(\vec m,\vec N)=0$ if any entry in $\vec N$ or $\vec m$ is negative; (ii) $G(\vec 0,\vec 0)=1$, where $\vec 0=(0,0,\ldots,0)$. In classic algorithms, $G(\vec m,\vec N)$ is obtained by recursively evaluating one between (\ref{EQU:ce}) and (\ref{EQU:pc}) until termination conditions are met. Following this approach, time and space requirements grow roughly as $O(N^R)$ if (\ref{EQU:ce}) is used (e.g., LBANC) and as $O(N^M)$ if (\ref{EQU:pc}) is used (e.g., RECAL). In practice, these costs are often prohibitive since in modeling modern systems it is not difficult to have $N$ of the order of hundreds or thousands and $\min \{M,R\}\geq 5-6$ (see \cite{KouB03} for a recent case study), which make the storage requirement of hundreds of gigabytes regardless of the recursion used.

MoM avoids this memory inefficiency by observing that, if one considers a certain subset of normalizing constants ${\vec V}(\vec N)$, which we call {\em basis}, then this basis can be computed recursively by jointly using (\ref{EQU:ce}) and (\ref{EQU:pc}) to define the matrix difference equation
\begin{equation}
\label{EQU:momrec}
{\bf A}(\vec N) {\vec V}(\vec N) = {\bf B}(\vec N) {\vec V}(\vec N-\vec 1_R),
\end{equation}
where ${\vec V}(\vec 0)$ is known from the termination conditions of (\ref{EQU:ce})-(\ref{EQU:pc}), and the matrices ${\bf A}(\vec N)$ and ${\bf B}(\vec N)$ are square of identical size. The matrices ${\bf A}(\vec N)$ and ${\bf B}(\vec N)$ are defined by the coefficients of the equations (\ref{EQU:ce})-(\ref{EQU:pc}) that relate \emph{all and only} the normalizing constants in ${\vec V}(\vec N)$ with those in ${\vec V}(\vec N-\vec 1_R)$. The basis is:
\begin{multline*}
{\vec V}(\vec N)=\{G(\vec m^\prime,\vec N), G(\vec m^\prime,\vec N-\vec 1_1), \ldots, G(\vec m^\prime,\vec N-\vec 1_{R-1})\\\,|\,\vec m^\prime
= \vec m+(\delta_1,\ldots,\delta_M),~R-1\leq {\textstyle \sum_{k=1}^M} \delta_k\leq R\},
\end{multline*}
which is the set of normalizing constants of models where we have increased the elements of the vector $\vec m$ by $R$ or $R-1$ units in all possible ways and where the models are evaluated over the populations $\vec N$, $\vec N-\vec 1_1$, $\ldots$, $\vec N-\vec 1_{R-1}$. The multiplicity increase operation is equivalent to add new queues to the model and, probabilistically, this can be interpreted as computing binomial moments of queue-lengths in the original queueing network \cite{ConG86,Cas06b,Cas09}; hence one concludes that a recursive computation of ${\vec V}(\vec N)$ is also a recursive evaluation of higher-order moments of queue-length. Indeed, the knowledge of ${\vec V}(\vec N)$ is sufficient to compute all the normalizing constants used in (\ref{EQU:indexes}), see \cite{Cas06b}; thus, computing ${\vec V}(\vec N)$ is equivalent to solve the model.

The interest for (\ref{EQU:momrec}) is that the matrix recursion is linear and does not branch exponentially like (\ref{EQU:ce})-(\ref{EQU:pc}), since we can progressively remove the elements of $\vec N$ without increasing the size of the ${\vec V}(\cdot)$ vectors and until the termination condition ${\vec V}(\vec 0)$ is reached. If the linear system (\ref{EQU:momrec}) is non-singular, one can compute ${\vec V}(\vec N)={\bf A}^{-1}(\vec N){\bf B}(\vec N)$ by an exact solution technique, like exact Gaussian elimination or the Wiedemann algorithm\footnote{See, e.g., the LinBox open source library (\url{http://www.linalg.org}) for a free implementation of the Wiedemann algorithm, exact Gaussian elimination, and other exact methods that can be used to solve the MoM matrix difference equation.} which prevent the critical effects of round-off error accumulation when the recursion is evaluated hundreds or thousands of times and also avoid numerical issues arising in normalizing constant computations  \cite{Cas09}. If the Wiedemann algorithm is used, the computational cost of linear system solution grows quadratically with the basis size and as $O(N^2\log N)$ with respect to the total population, which is typically much less than the $O(N^R)$ and $O(N^M)$ of classic methods. An example illustrating the MoM algorithm is given below, together with intuition on the MoM generalization proposed in this work.

\section{Motivating Example}
\label{SEC:motivating}
We begin by illustrating the structure of (\ref{EQU:momrec}) on a simple queueing network with $M=2$ queues, $R=2$ classes, a population $\vec N=(N_1,N_2)$, and where $\vec m=(1,1)$, i.e., all queues are distinct. To compact notation, let use denote $d_{z,k,s}=(m_k + z)\cdot D_{k,s}$ and $G^{+a,b}_{c,d}=G(\vec m+\vec 1_a+\vec 1_b,\vec N -\vec 1_c -\vec 1_d)$. Then, the linear system (\ref{EQU:momrec}) has the following structure:
\begin{multline}
\label{l:example2A} \arraycolsep 0.1em \scriptsize
\underbrace{\begin{bmatrix}
    1 & -D_{1,1}& \cdot & \cdot  & \cdot & \cdot     & -1 &   \cdot &   \cdot    & \cdot \\
    \cdot & \cdot                 & 1 & -D_{1,1} & \cdot & \cdot & \cdot & \cdot & -1 &  \cdot  \\
        \hline
    \cdot & \cdot                 & 1 & -D_{2,1} & \cdot & \cdot & -1 & \cdot & \cdot &  \cdot  \\
    \cdot & \cdot                & \cdot & \cdot                 & 1 & -D_{2,1} & \cdot & \cdot & -1 &  \cdot  \\
        \hline
    \cdot & -d_{1,1,1} & \cdot          & -d_{0,2,1} & \cdot          & \cdot  & N_1 & -Z_{1} & \cdot &  \cdot  \\
    \cdot & \cdot & \cdot                & -d_{0,1,1} & \cdot          & -d_{1,2,1} & \cdot & \cdot & N_1 &  -Z_{1}  \\
    \hline
    \cdot & \cdot                & \cdot & \cdot                 & \cdot & \cdot & N_2 & \cdot & \cdot &  \cdot  \\
    \cdot & \cdot                & \cdot & \cdot                 & \cdot & \cdot & \cdot & N_2 & \cdot &  \cdot  \\
    \cdot & \cdot                & \cdot & \cdot                 & \cdot & \cdot & \cdot & \cdot & N_2 &  \cdot  \\
    \cdot & \cdot                & \cdot & \cdot                 & \cdot & \cdot & \cdot & \cdot & \cdot &  N_2  \\
\end{bmatrix}}_{{\bf A}(\vec N)}
\underbrace{\begin{bmatrix}
G^{+1,1} \\ G^{+1,1}_1 \\ G^{+1,2} \\ G^{+1,2}_1 \\ G^{+2,2} \\ G^{+2,2}_1 \\ G^{+1} \\ G^{+1}_1 \\ G^{+2} \\ G^{+2}_1
\end{bmatrix}}_{\vec V(\vec N)}
\\= \arraycolsep 0.1em \scriptsize
\underbrace{\begin{bmatrix}
    D_{1,2} & \cdot& \cdot & \cdot  & \cdot & \cdot     & \cdot &   \cdot &   \cdot    & \cdot \\
    \cdot & \cdot& D_{1,2} & \cdot  & \cdot & \cdot     & \cdot &   \cdot &   \cdot    & \cdot \\
        \hline
    \cdot & \cdot& D_{2,2} & \cdot  & \cdot & \cdot     & \cdot &   \cdot &   \cdot    & \cdot \\
    \cdot & \cdot& \cdot & \cdot  & D_{2,2} & \cdot     & \cdot &   \cdot &   \cdot    & \cdot \\
        \hline
    \cdot & \cdot& \cdot & \cdot  & \cdot & \cdot     & \cdot &   \cdot &   \cdot    & \cdot \\
    \cdot & \cdot& \cdot & \cdot  & \cdot & \cdot     & \cdot &   \cdot &   \cdot    & \cdot \\
        \hline
    d_{1,1,2} & \cdot& d_{0,2,2} & \cdot  & \cdot & \cdot     & Z_{2} &   \cdot &   \cdot    & \cdot \\
    \cdot & d_{1,1,2} & \cdot& d_{0,2,2} & \cdot  & \cdot & \cdot     & Z_{2} &   \cdot &   \cdot  \\
    \cdot & \cdot & d_{0,1,2} & \cdot  & d_{1,2,2} & \cdot     & \cdot & \cdot&  Z_{2} &   \cdot    \\
    \cdot & \cdot & \cdot & d_{0,1,2}  & \cdot & d_{1,2,2}     & \cdot & \cdot&  \cdot &  Z_{2}    \\
\end{bmatrix}}_{{\bf B}(\vec N)}
\underbrace{\begin{bmatrix}
G^{+1,1}_2 \\ G^{+1,1}_{1,2} \\ G^{+1,2}_{2} \\ G^{+1,2}_{1,2} \\ G^{+2,2}_{2} \\ G^{+2,2}_{1,2} \\ G^{+1}_{2} \\ G^{+1}_{1,2} \\ G^{+2}_{2} \\ G^{+2}_{1,2}
\end{bmatrix}}_{\vec V(\vec N-\vec 1_R)}
\nonumber
\end{multline}
where $\cdot$ indicates a zero element, and the four blocks of the coefficient matrices represent from top: the CE (\ref{EQU:ce}) for $k=1$,  the CE for $k=2$, the PC (\ref{EQU:pc}) for $r=1$, and the PC for $r=2$. The basis of normalizing constants is depicted in Figure~\ref{FIG:basis}. We remark that, for each element in the figure, the basis includes both normalizing constants for the populations $\vec N$ and $\vec N-\vec 1_1$, hence the total number of elements is $card({\vec V}(\vec N))=10$.

\begin{figure}
\center
  \includegraphics[width=7cm]{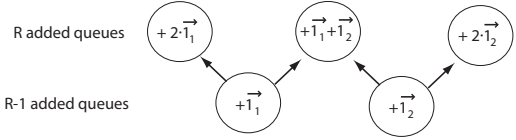}\\
  \caption{\footnotesize \em Basis of normalizing constants ${\vec V}(\vec N)$ for a model with $M=2$ queues and $R=2$ classes. Each circle represents a group of $R$ normalizing constants of models with populations $\vec N$ and $\vec N-1_{1}$. Labels indicate the increase of the multiplicity vector $\vec m$ relatively to that subset of normalizing constants.}
  \label{FIG:basis}
\end{figure}

\noindent The fundamental observations presented in this paper to improve MoM and, specifically, to considerably reduce the cost of computing ${\vec V}(\vec N)$, are as follows:
\begin{enumerate}
 \item we first note that it is possible to add independent equations to the above linear system by taking in consideration a generalization of the convolution expression (\ref{EQU:ce}) explained later in the paper; this generalization provides independent information and makes the linear system over-determined.
 \item we show that, if the linear system is over-determined, then the basis ${\vec V}(\vec N)$ can be defined smaller, while still preserving the capability of MoM of solving exactly queueing networks. The basis size reduction leads to remarkable computational savings compared to the original MoM approach.
 \item however, as we explain in Section~\ref{SEC:dncmom}, for models of arbitrary size the additional independent information comes at the price of additional recursions over models with different number of queues. We investigate in the rest of the paper if accepting these additional recursions is convenient with respect to the computational savings implied by the basis size reduction.
\end{enumerate}
The previous observations are further illustrated in the next subsection.

\subsection{Improved Computation of the Basis of Normalizing Constants}

We begin by observing that (\ref{EQU:ce}) can be seen as a specialization of the recursive equation used by the Convolution Algorithm\cite{Buz73,ReiK75b}, which we call the {\em generalized convolution expression} (GCE)
\begin{equation}
\label{EQU:gce} G(\vec m,\vec N)=G(\vec m-\vec 1_k, \vec N)+\sum_{r=1}^{R} D_{k,r} G(\vec m, \vec N-\vec 1_r),
\end{equation}
for all $1\leq k\leq M$. Here queues are removed through the parameter $\vec m-\vec 1_k$, instead of being added as in (\ref{EQU:ce}). This implies that a recursion involving (\ref{EQU:gce}) may also evaluate models which contain less queues than in the original queueing network, while (\ref{EQU:ce}) operates on networks with multiplicity $\vec m^\prime \geq \vec m$ only. However, by instantiating  (\ref{EQU:gce}) on a model with multiplicity $\vec m+\vec 1_k$ instead of  $\vec m$, it is found that (\ref{EQU:gce}) becomes identical to (\ref{EQU:ce}), thus (\ref{EQU:ce}) specifies a subset of (\ref{EQU:gce}). Whenever (\ref{EQU:gce}) is instantiated on models with less queues than in the original network, the information provided by (\ref{EQU:gce}) is independent with respect to the one provided by (\ref{EQU:ce}), because the two equations are defined over models with different network structure. For example, equation (\ref{EQU:gce}) may be added to the simple queueing network considered before if instantiated as
\begin{multline}
\label{EQU:addsce}
G(\vec m+2\cdot \vec 1_1,\vec N)=G(\vec m+2\cdot \vec 1_1-\vec 1_2, \vec N)\\+\sum_{r=1}^{R} D_{2,r} G(\vec m+2\cdot \vec 1_1, \vec N-\vec 1_r).
\end{multline}
In this case, the normalizing constant $G(\vec m+2\cdot \vec 1_1-\vec 1_2, \vec N)$ lies outside the basis ${\vec V}(\vec N)$, thus equation (\ref{EQU:addsce}) does not reduce to a CE and provides independent information. Note also that $G(\vec m+2\cdot \vec 1_1-\vec 1_2, \vec N)$ is the normalizing constant of a model where queue $2$ has been completely removed since we have assumed $\vec m=(1,1)$, thus it can be computed easily with closed-form formulas for the balanced network case \cite{MitM85} and therefore the addition of (\ref{EQU:addsce}) does not increase the number of unknowns in the linear system.

The main idea investigated in this paper is that this independent information can be exploited effectively to reduce the size of the basis ${\vec V}(\vec N)$. In fact, consider a new basis ${\vec V}_{new}(\vec N)$ composed by normalizing constants with $R-2\leq \sum_k m_k\leq R-1$ additional queues instead of the $R-1\leq \sum_k m_k\leq R$ as in the original definition of ${\vec V}(\vec N)$. Then, using (\ref{EQU:ce}), (\ref{EQU:pc}), and (\ref{EQU:addsce}), we can define a linear system with square matrix of coefficients
\begin{multline}
\label{l:example2A} \arraycolsep 0.1em \scriptsize
\underbrace{\begin{bmatrix}
    1 & -D_{1,1}& \cdot & \cdot  & -1  & \cdot \\
    \hline
    \cdot & \cdot  & 1 & -D_{2,1}& -1  & \cdot \\
    \hline
    1 & -D_{2,1}& \cdot & \cdot  & \cdot  & \cdot \\
    \hline
    \cdot & -d_{0,1,1} & \cdot & -d_{0,2,1} & N_1  & -Z_1 \\
    \hline
    \cdot & \cdot & \cdot & \cdot  & N_2  & \cdot \\
    \cdot & \cdot & \cdot & \cdot  & \cdot  & N_2 \\
\end{bmatrix}}_{{\bf A}_{new}(\vec N)}
\underbrace{\begin{bmatrix}
G^{+1} \\ G^{+1}_1 \\ G^{+2} \\ G^{+2}_1 \\ G \\ G_1
\end{bmatrix}}_{\vec V_{new}(\vec N)}
\\=  \arraycolsep 0.1em \scriptsize
\underbrace{\begin{bmatrix}
    \cdot \\
    \hline
        \cdot \\
    \hline
    G^{+1,+1,-2} \\
    \hline
    \cdot \\
    \hline
    \cdot \\
    \cdot \\
\end{bmatrix}}_{\vec V^{-k}_{new}(\vec N)}
+
\underbrace{\begin{bmatrix}
    D_{1,2} & \cdot & \cdot & \cdot & \cdot & \cdot \\
    \hline
    \cdot   & \cdot & D_{2,2} & \cdot & \cdot & \cdot \\
    \hline
    D_{2,2} &\cdot   & \cdot & \cdot & \cdot & \cdot \\
    \hline
    \cdot &\cdot   & \cdot & \cdot & \cdot & \cdot \\
    \hline
    d_{0,1,2} &\cdot   & d_{0,2,2} & \cdot & Z_2 & \cdot \\
    \cdot & d_{0,1,2} &\cdot   & d_{0,2,2} & \cdot & Z_2 \\
\end{bmatrix}}_{{\bf B}_{new}(\vec N)}
\underbrace{
\begin{bmatrix}
G^{+1}_2 \\ G^{+1}_{1,2} \\ G^{+2}_2 \\ G^{+2}_{1,2} \\ G_2 \\ G_{1,2}
\end{bmatrix}}_{\vec V_{new}(\vec N-\vec 1_R)}
\nonumber
\end{multline}
where the new vector $\vec V^{-k}_{new}(\vec N)$ includes the normalizing constant $G^{+1,+1,-2}\equiv G(\vec m+2\cdot \vec 1_1-\vec 1_2, \vec N)$ used in (\ref{EQU:addsce}), and the blocks of the coefficient matrix are from the top: the CE for $k=1$, the CE for $k=2$, the GCE (\ref{EQU:addsce}), the PC for $r=1$, and the PC for $r=2$. The new linear system may be written compactly as
\begin{equation}
\label{EQU:momrecnew}
{\bf A}_{new}(\vec N)\vec V_{new}(\vec N)=V^{-k}_{new}(\vec N)+{\bf B}_{new}(\vec N)\vec V_{new}(\vec N-\vec 1_R)
\end{equation}
with square coefficient matrix, thus if ${\bf A}_{new}^{-1}(\vec N)$ exists \emph{the solution of the linear system (\ref{EQU:momrecnew}) provides an alternative way to recursively compute normalizing constants that is cheaper than the original linear system (\ref{EQU:momrec}), since (\ref{EQU:momrecnew}) halves the order of the coefficient matrix with respect to (\ref{EQU:momrec})}. We stress that without (\ref{EQU:addsce}) the new system (\ref{EQU:momrecnew}) would be under-determined, thus resorting to the GCE equations is critical for this new approach.

It is also important to remark that, for queueing networks larger than the one considered in this experiment, the normalizing constants in $V^{-k}_{new}(\vec N)$ may not be available from closed-form expressions. In this case, the computation of $V^{-k}_{new}(\vec N)$ requires additional recursions over models with different number of queues; we show in the next section that, if multiple equations (\ref{EQU:gce}) are used simultaneously, this yields a multi-branched recursive structure for the MoM algorithm, where one needs to evaluate recursively also models with less queues that are not considered in the original MoM recursion.

\section{The Multi-Branched Method of Moments}
\label{SEC:dncmom}
The integration of the GCE (\ref{EQU:gce}) into the MoM linear system can be done in different ways depending on the number of equations (\ref{EQU:gce}) simultaneously instantiated into the matrix difference equation. As observed earlier, integrating GCEs into MoM allows to reduce the basis size; this reduction can be specified by a decrease in the number of queues added to the multiplicity vectors in the basis, which is equivalent to considering queue-length moments of smaller order. Specifically, if the new basis $V_{new}(\vec N)$ includes models with only $l-1$ and $l$ added queues, one can integrate a single or multiple GCEs \emph{for each model with $l$ additional queues only}\footnote{The GCE is not needed for models with $l-1$ additional queues, since their normalizing constants are all immediately computed from the basis for $\vec V(\vec N-1_R)$ using the PC of class $R$.}. A comparison of the recursion trees arising from the two alternatives (single or multiple GCEs) is given in Figure \ref{FIG:singlevsmultiple}.

\begin{figure}
\center
  \subfigure[Single GCE]{\includegraphics[width=2.3cm]{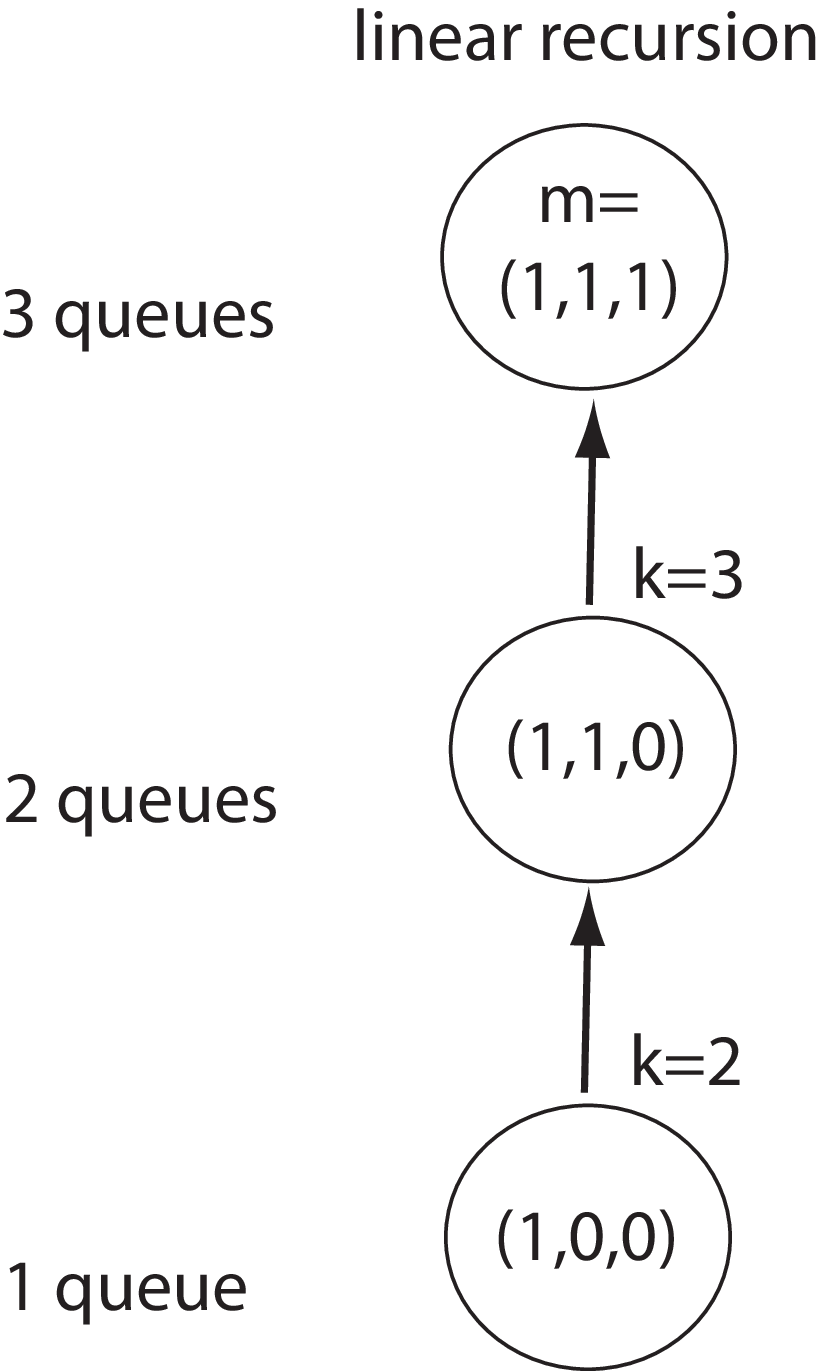}}\quad\quad\quad
  \subfigure[Multiple GCEs]{\includegraphics[width=4cm]{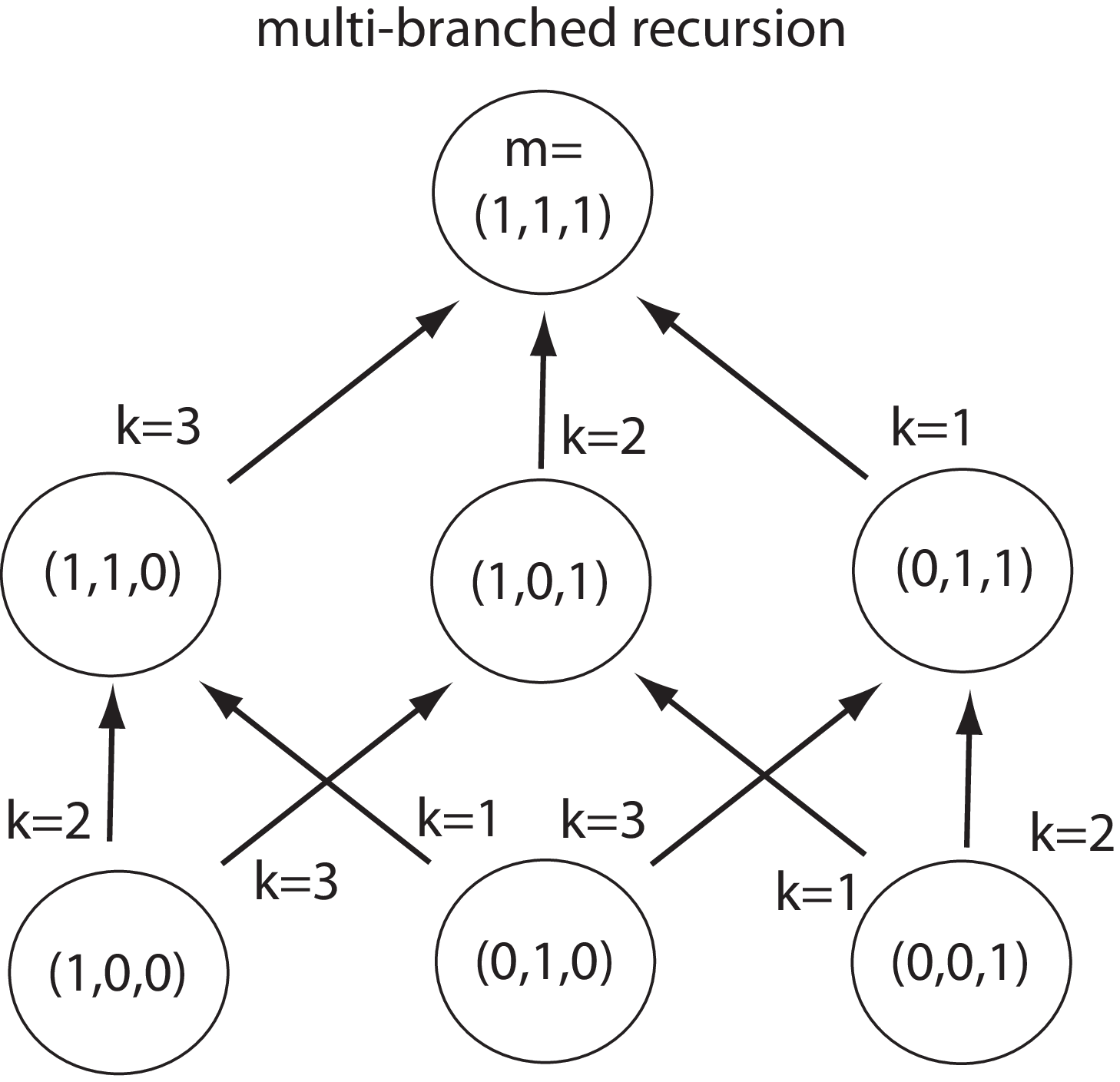}}\\
  \caption{\footnotesize \em Structure of the MoM recursion after addition of a single or multiple GCEs (\ref{EQU:gce}) on a model with $M=3$ queues. The label $k=1$, e.g., indicates a GCE instantiated for $k=1$ on all models with $l$ added queues in the redefined basis. Labels within a circle indicate the multiplicity vector $\vec m$ on which the basis is defined, e.g., $(1,0,1)$ is the model obtained from the original queueing network by removing queue $2$.}
  \label{FIG:singlevsmultiple}
\end{figure}

Using a {single} GCE implies an additional $M$ recursions which first remove from the model queue $M$, followed by queue $M-1$, and so forth up to a trivial model with a single queue. Instead, using all possible GCEs implies that the additional recursions first consider {\em all} possible ${M\choose M-1}$ models with $M-1$ queues, followed by {\em all} possible ${M\choose M-2}$ models with $M-2$ queues, and so on up to models with a single queue. The latter approach appears to be the most expensive, at least if one ignores the basis size reduction, because it has a number of new recursions that grows combinatorially instead of linearly. Yet, while limiting to a single GCE seems a natural choice to control the number of new recursions, we have noted that in practice the additional information of the multiple GCEs implies a much larger reduction of the basis than in the case of a single GCE. This in turn provides computational savings often greater than the additional overheads imposed by the extra recursions. Thus, in this section we investigate the trade-off imposed by different types of integrations of GCEs and consider the general case of simultaneously considering up to $B$, $1\leq B\leq M$, GCEs in the linear system. We also provide a complexity analysis to evaluate the best choice of this {\em branching factor} $B$ as a function of the other model parameters.

\subsection{Basis Reduction}
We now investigate the reduction of the basis cardinality as a function of the number of GCE equations added to the MoM matrix difference equation. Indeed, the most interesting cases are 1) when (\ref{EQU:gce}) is added for a single value of $k$ or 2) when all possible equations in (\ref{EQU:gce}) are added; in fact, intermediate cases imply a combinatorial branching of the recursion and thus grow in computational complexity similarly to the second case. Let us define a {\em basis of level $l$}, $l\geq 1$, as the set
\begin{multline}
{\vec V}_l(\vec N)=\{G(\vec m^\prime,\vec N), G(\vec m^\prime,\vec N-\vec 1_1), \ldots, G(\vec m^\prime,\vec N-\vec 1_{R-1})
\\\,|\,\vec m^\prime= \vec m+(\delta_1,\ldots,\delta_M),~l-1\leq {\textstyle \sum_{k=1}^M} \delta_k\leq l\},
\end{multline}
which is the set of normalizing constants with $l-1$ or $l$ additional replicated queues. According to this definition, in MoM it is always ${\vec V}(\vec N)\equiv {\vec V}_R(\vec N)$, while the basis in the example of the last section after the addition of the GCE is $\vec V_{new}(\vec N)\equiv {\vec V}_{R-1}(\vec N)$. A basis of level $l$ has cardinality $card({\vec V}_l(\vec N))={M+l-1\choose l}R$, thus \emph{a decrease, thanks to the GCEs, of $l$ even by a few units implies a quick combinatorial reduction of the number of elements in the basis}. The next theorems are the fundamental result of this paper and exactly quantify the amount of this reduction.

\begin{theorem}
\label{THM:mainM}
The inclusion in the MoM matrix difference equation (\ref{EQU:momrec}) of the GCEs (\ref{EQU:gce}) for $k=1,\ldots,M$ on all models having $l$ additional queues in the basis ${\vec V}_l(\vec N)$ allows to define a linear system of the type
\begin{equation}
\label{EQU:dcmomrec}
{\bf A}_{l}(\vec N)\vec V_{l}(\vec N)=\vec V^{-k}_{l}(\vec N)+{\bf B}_{l}(\vec N)\vec V_{l}(\vec N-\vec 1_R),
\end{equation}
which has more equations than unknowns if $l\geq \max\{1,R-M\}$. Therefore, the basis has minimum cardinality for $l=\max\{1,R-M\}$.
\end{theorem}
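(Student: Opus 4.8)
The plan is to prove the statement by a pure counting argument: I would compare the number of scalar unknowns carried in $\vec V_l(\vec N)$ with the number of independent scalar rows that the admissible instances of the CE (\ref{EQU:ce}), GCE (\ref{EQU:gce}) and PC (\ref{EQU:pc}) relations contribute to the coefficient matrix ${\bf A}_l(\vec N)$ of (\ref{EQU:dcmomrec}), and then show that the row surplus changes sign exactly at $l=\max\{1,R-M\}$. Throughout I treat the constants collected in $\vec V^{-k}_l(\vec N)$ and in $\vec V_l(\vec N-\vec 1_R)$ as known, since they are produced by the queue-removing branches and by the previous recursion step; hence they sit on the right-hand side and are \emph{not} counted as unknowns.

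First I would count unknowns. Writing $n_j=\binom{M+j-1}{M-1}$ for the number of nonnegative increments $\vec\delta=(\delta_1,\ldots,\delta_M)$ with $\sum_k\delta_k=j$, the definition of $\vec V_l(\vec N)$ attaches $R$ constants (the populations $\vec N,\vec N-\vec 1_1,\ldots,\vec N-\vec 1_{R-1}$) to every model of level $l-1$ and of level $l$, so $U=(n_{l-1}+n_l)R$. Next I would enumerate, relation by relation, exactly which instances remain \emph{internal}, i.e.\ have all terms inside $\vec V_l(\vec N)$ or inside the known right-hand vectors. This bookkeeping is where care is needed: an instance taken at a reduced population $\vec N-\vec 1_{r'}$ with $0<r'<R$ generically spawns a doubly reduced population $\vec N-\vec 1_{r'}-\vec 1_s$ that leaves the basis, and any PC or CE written on a level-$l$ model spawns level-$(l+1)$ constants that also leave it. Carrying out the case analysis I expect: the CE at population $\vec N$ gives one row per (level-$(l-1)$ model, appended queue $k$), i.e.\ $Mn_{l-1}$ rows; the driving-class PC ($r=R$) on level-$(l-1)$ models at all $R$ admissible populations gives $n_{l-1}R$ rows; the non-driving PC ($r<R$) on level-$(l-1)$ models at population $\vec N$ gives $n_{l-1}(R-1)$ rows; and the GCE on a level-$l$ model contributes a genuinely new row precisely when the removed queue satisfies $\delta_k=0$ (otherwise, as in Section~\ref{SEC:motivating}, it collapses onto a CE already counted), yielding $Mn'_l$ new rows with $n'_l=\binom{M+l-2}{M-2}$.

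Summing and using the Pascal identity $n_l=n_{l-1}+n'_l$ together with the ratio $n_l/n_{l-1}=(M+l-1)/l$, I would reduce the total to $E=n_{l-1}(2R-1)+Mn_l$ and obtain the surplus
\[
E-U=(R-1)\,n_{l-1}+(M-R)\,n_l=\frac{(M-1)(l+M-R)}{l}\,n_{l-1}.
\]
For $M\geq 2$ the sign of $E-U$ is the sign of $l+M-R$, so $E\geq U$ iff $l\geq R-M$; intersecting with the structural constraint $l\geq 1$ gives $E\geq U$ iff $l\geq\max\{1,R-M\}$, with a strict surplus whenever $R\leq M$ (for instance $E-U=1$ when $M=R=2$ at $l=1$, matching the reduced example) and an exactly square system at the boundary $l=R-M$ when $R>M$. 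Since the system is under-determined for every smaller $l$ and the cardinality $(n_{l-1}+n_l)R$ is strictly increasing in $l$, the smallest admissible level $l=\max\{1,R-M\}$ is precisely the one that minimizes the basis size.

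I expect the genuine obstacle to be not the counting but the \emph{linear independence} of the enumerated rows, because $E\geq U$ only certifies solvability if the admissible CE/GCE/PC rows are independent, so that a non-singular square subsystem ${\bf A}_l(\vec N)$ can be extracted. I would establish this along the lines of Section~\ref{SEC:motivating}: a GCE written on a level-$l$ model in which queue $k$ is absent from the increment ($\delta_k=0$) involves the queue-removed constant $G(\vec m+\vec\delta-\vec 1_k,\vec N)\in\vec V^{-k}_l(\vec N)$, which lies outside $\vec V_l(\vec N)$ and appears in no CE or PC row; hence each such GCE is independent of all the others, and independence of the remaining rows reduces to the known non-singularity of the original MoM sub-structure, which persists for generic demands $D_{k,r}$ and think times $Z_r$.
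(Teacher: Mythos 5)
Your proof is correct and takes essentially the same approach as the paper's: a pure counting argument in which the genuinely new GCE rows are exactly those removing a queue $k$ with $\delta_k=0$ (your $Mn'_l$ coincides with the paper's Vandermonde-simplified count $\binom{M+l-2}{l}M$), leading to the same threshold $l=\max\{1,R-M\}$ where the equation surplus changes sign. The only differences are bookkeeping: you tally all rows and unknowns where the paper first nets out the level-$(l-1)$ constants via the class-$R$ PC, and your single closed-form surplus $\frac{(M-1)(l+M-R)}{l}\,n_{l-1}$ unifies the paper's two separate cases $R>M+1$ and $R\leq M+1$.
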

\begin{proof}
A basis of level $l$ has ${M+l-1\choose l}R$ normalizing constants, while the total number of CEs and PCs is ${M+l-2\choose l-1}(M+R-1)$ since there exist $M$ CEs and $R-1$ PCs for each of the ${M+l-2\choose l-1}$ normalizing constant with $l-1$ additional queues and all other possible CEs and PCs require constants outside the basis. Thus we have that, in absence of GCEs, there are more equations than unknowns in the matrix difference equation if
$
{M+l-2\choose l-1}(M+R-1) \geq {M+l-1\choose l}R,
$
which is true for all $l\geq R$. In particular, $l=R$ gives the minimum cardinality of the basis and for this reason it is the choice done by MoM for its basis ${\vec V}(\vec N)\equiv {\vec V}_R(\vec N)$ which ignores the GCEs. We now add to the previous condition the number of additional GCEs which do not specialize into CEs and that we can formulate for models with $l$ additional queues, which is
$
\sum_{h=1}^{\min\{M,l\}} {M\choose h}{l-1\choose l-h}(M-h).
$
This can be explained as follows. Consider a model with normalizing constant in ${\vec V}_l(\vec N)$ and where we have added $l$ queues. Denote by $h$ the number of distinct queues among the $l$ queues we have added. It is possible to see that removing any of these $h$ queues using a GCE involves only normalizing constants with $l-1$ added queues that are already in the bases ${\vec V}_l(\vec N)$ and ${\vec V}_l(\vec N-\vec 1_R)$, thus these specific GCE equations are identical to the CEs and do not provide independent information. Therefore, for a model with $h$ distinct additional queues, only $M-h$ GCEs are different from the existing CEs. Note that there are ${M\choose h}$ ways of choosing the $h$ distinct queues and ${h+(l-h)-1\choose l-h}={l-1\choose l-h}$ ways of adding $l$ queues to the model chosen among these $h$ distinct ones under the constraint that each of the $h$ queues is chosen at least once. Combining these expressions gives the number of GCEs that are not CEs, which simplifies to
$
\sum_{h=1}^{\min\{M,l\}} {M\choose h}{l-1\choose l-h}(M-h)
\\={M+l-2\choose l-1}(M+R-1)={M+l-2\choose l}M,
$
where the first passage follows by Vandermonde convolution \cite{Coh78}.

Adding the number of GCEs that are not CEs, we evaluate the following condition for (\ref{EQU:dcmomrec}) to have more equations than unknowns
$
{M+l-2\choose l}M+{M+l-2\choose l-1}(M+R-1) \\\geq {M+l-1\choose l}R.
$
Suppose first $R>M+1$ and thus $l=\max\{1,R-M\}=R-M$, then we consider the condition
$
{R-2\choose R-M}M+{R-2\choose R-M-1}(M+R-1)\geq{R-1\choose R-M}R
$
which using the property of binomial coefficients ${n\choose k}={n-1\choose k}+{n-1\choose k-1}$ on the right hand side gives
${R-2\choose R-M}M+{R-2\choose R-M-1}(M+R-1)\geq{R-2\choose R-M}R + {R-2\choose R-M-1}R$
that simplifies to
${R-2\choose R-M-1}(M-1)\geq{R-2\choose R-M}(R-M)$
which is actually an equality because, after expanding the binomial coefficients, both sides are found identical. Hence, since $l=R-M$ always returns an equality between number of equations and number of unknowns, it is easy to verify that $l<R-M$ would always give an under-determined system and thus $l=R-M$ gives the minimum allowable basis size for the case $R>M+1$.

Consider now the other case $R\leq M+1$ where $l$ takes the minimum possible value $l=\max\{1,R-M\}=1$, we have then
$
{M\choose 1}(M-1)+{M-1\choose 0}(M+R-1)\geq{M\choose 1}R
$
which is equivalent to
$M(M-1)+(M+R-1)\geq MR$ and assuming the worst case $R=M+1$ we get
$M(M-1)+2M\geq M(M+1)$
which is always true because the two sides simplify to the same identical value. This means that the linear system is always square if we use the minimum value $l=1$ when $R\leq M+1$.

We can summarize the above findings saying that $l=\max\{1,R-M\}$ always implies a ${\bf A}_{l}(\vec N)$ matrix that is square or over-determined and that smaller values of $l$ instead result in under-determined systems for certain values of $M$ and $R$. This concludes the proof of the theorem.
\end{proof}

\begin{theorem}
\label{THM:main1}
The inclusion of a single GCE (\ref{EQU:gce}) for given $k$ in the MoM matrix difference equation (\ref{EQU:momrec}) allows to define a linear system similar to (\ref{EQU:dcmomrec}), but which has more equations than variables if $l\geq \max\{1,R-1\}$. In particular, the basis has minimum cardinality for $l=\max\{1,R-1\}$.
\end{theorem}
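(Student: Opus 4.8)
The plan is to mirror the combinatorial accounting in the proof of Theorem~\ref{THM:mainM}, changing only the count of independent GCE equations so as to reflect that a \emph{single} value of $k$ is used in (\ref{EQU:gce}). I would reuse verbatim the two quantities already established there: the number of unknowns carried by a basis of level $l$, namely $\binom{M+l-1}{l}R$, and the number of CE and PC equations that relate these unknowns to constants inside the basis, namely $\binom{M+l-2}{l-1}(M+R-1)$. The whole argument then reduces to recomputing how many genuinely new equations a single GCE contributes and re-solving the resulting inequality for the threshold value of $l$.

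The key step is counting the independent GCEs produced by (\ref{EQU:gce}) for one fixed $k$. The crucial observation, already implicit in the proof of Theorem~\ref{THM:mainM}, is that the GCE removing queue $k$ from a model with $l$ added queues fails to provide independent information (it coincides with a CE) exactly when queue $k$ is among the replicated queues of that model, i.e.\ when $\delta_k\geq 1$, and is independent precisely when $\delta_k=0$, since then the removal of queue $k$ produces a model lying outside the basis. The number of level-$l$ models with $\delta_k=0$ is the number of ways of distributing $l$ additional queues among the remaining $M-1$ distinct queues, which is $\binom{M+l-2}{l}$. This is a factor $M$ smaller than the $\binom{M+l-2}{l}M$ independent GCEs used in the all-GCE case of Theorem~\ref{THM:mainM}, and it is this single change that shifts the threshold from $R-M$ to $R-1$; as a consistency check I would note that summing $\binom{M+l-2}{l}$ over the $M$ admissible choices of $k$ recovers the all-GCE count $\binom{M+l-2}{l}M$.

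Next I would assemble the square/over-determined condition
\[
\binom{M+l-2}{l}+\binom{M+l-2}{l-1}(M+R-1)\ \geq\ \binom{M+l-1}{l}R ,
\]
and evaluate it at the candidate critical level. For $R>2$, so that $\max\{1,R-1\}=R-1$, I would set $l=R-1$, expand the right-hand side with Pascal's rule $\binom{M+R-2}{R-1}=\binom{M+R-3}{R-1}+\binom{M+R-3}{R-2}$, and then use the ratio $\binom{M+R-3}{R-1}=\frac{M-1}{R-1}\binom{M+R-3}{R-2}$ to collapse both sides to $(M-1)\binom{M+R-3}{R-2}$, yielding an exact equality, just as in Theorem~\ref{THM:mainM}. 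For the remaining regime $R\leq 2$, where $\max\{1,R-1\}=1$, I would instantiate $l=1$ and check $(M-1)+(M+R-1)\geq MR$, which in the worst case $R=2$ reduces to $2M\geq 2M$ and is strict for $R=1$; since $l\geq 1$ by definition of the basis level, this is the smallest admissible choice.

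Finally I would show that $l<R-1$ forces an under-determined system, so that $l=\max\{1,R-1\}$ is indeed the minimum-cardinality level. I expect the main obstacle to be the second step: rigorously pinning down that a single fixed-$k$ GCE contributes exactly $\binom{M+l-2}{l}$ new equations, i.e.\ justifying that the GCE specializes to a CE precisely on the models with $\delta_k\geq 1$ and is otherwise independent. Once this count is secured, the inequality analysis is a routine repetition of the binomial manipulations already carried out for Theorem~\ref{THM:mainM}, with the coefficient $M$ of the GCE term replaced by $1$ and the centering shifted from $R-M$ to $R-1$; to rule out smaller levels I would verify, by the same binomial identities, that the excess of unknowns over equations is strictly positive for every $l<R-1$, so that the equality at $l=R-1$ degrades into a strict deficiency below it.
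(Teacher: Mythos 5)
Your proposal is correct and follows essentially the same route as the paper: reuse the unknown count $\binom{M+l-1}{l}R$ and the CE/PC count $\binom{M+l-2}{l-1}(M+R-1)$ from Theorem~\ref{THM:mainM}, establish that a single fixed-$k$ GCE contributes $\binom{M+l-2}{l}$ independent equations, and verify via Pascal's rule that the resulting inequality becomes an equality at $l=\max\{1,R-1\}$. The only difference is cosmetic: you obtain the GCE count directly by counting the level-$l$ models with $\delta_k=0$, whereas the paper writes it as a difference of two sums and simplifies by Vandermonde convolution, landing on the same quantity and the same concluding inequality $\binom{M+l-2}{l-1}(M-1)\geq\binom{M+l-2}{l}(R-1)$.
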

\begin{proof}
The proof differs from that of Theorem~\ref{THM:mainM} for the number of GCE equations that are not CEs. Suppose that GCEs for given $k$ are used, and assume without loss of generality that the GCE of station $k=M$ is the one included in the matrix difference equation. Then the number of GCEs that are not CEs is
\begin{equation*}
\sum_{h=1}^{\min\{M,l\}} {M\choose h}{l-1\choose l-h}-\sum_{h=1}^{\min\{M-1,l-1\}} {M\choose h}{l-1\choose l-h},
\end{equation*}
where the left term follows similarly to the number of GCEs in Theorem~\ref{THM:mainM}, but for the case where one GCE is added, instead of $M$, to the models in $\vec V_l(\vec N)$ with $l$ additional queues. The right term counts instead the number of times this GCE is identical to an existing CE. The above expression becomes simpler thanks to Vandermonde convolution \cite{Coh78} and gives that we have more equations than unknowns if
$
{M+l-2\choose l}+{M+l-2\choose l-1}(M+R-1) \geq {M+l-1\choose l}R.
$
Now using ${n\choose k}={n-1\choose k}+{n-1\choose k-1}$ on the right hand side we get
$
{M+l-2\choose l}+{M+l-2\choose l-1}(M+R-1) \geq {M+l-2\choose l}R+{M+l-2\choose l-1}R,
$
which is equivalent to
$
{M+l-2\choose l-1}(M-1) \geq {M+l-2\choose l}(R-1).
$
and expanding the binomial coefficients it is found that the two sides are identical if $l=\max\{1, R-1\}$ which completes the proof.
\end{proof}
The results in Theorem~\ref{THM:mainM} and Theorem~\ref{THM:main1} show that: (1) if all GCEs are added to the MoM matrix difference equation, then the basis level can be decreased by up to $M$ units; (2) if a single GCE is used, the basis level can instead be decreased by a single unit. Following the same line of the proofs of Theorem \ref{THM:mainM} and Theorem \ref{THM:main1} it is then straightforward to show the following corollary.

\begin{corollary}
\label{COR:impact}
If $B$ GCEs, $1\leq B\leq M$, are added to the matrix difference equation, then the basis can be decreased by up to $B$ levels and the minimal basis size is obtained with the basis level $l=\min\{1,R-B\}$.
\end{corollary}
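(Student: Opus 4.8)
The plan is to mirror the proofs of Theorem~\ref{THM:mainM} and Theorem~\ref{THM:main1} verbatim: the equation-counting argument is identical in every respect except for the single quantity recording how many of the added GCEs fail to coincide with an existing CE, i.e.\ how many carry genuinely independent information. Since only this count depends on the branching factor $B$, the entire argument reduces to re-deriving it for arbitrary $1\leq B\leq M$ and substituting it into the same inequality ``number of equations $\geq$ number of unknowns'' used in the two theorems.

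First I would fix a set $S$ of $B$ distinct queues whose GCEs are instantiated (the final count will turn out independent of which $B$ queues are chosen, so this is without loss of generality) and count the independent GCEs one model at a time. Consider a normalizing constant in ${\vec V}_l(\vec N)$ obtained by adding $l$ queues of which $h$ are distinct, and let $T$ be the set of those $h$ distinct queues. By the same observation used in Theorem~\ref{THM:mainM}---removing a queue already present among the added ones yields a level-$(l-1)$ constant already in the basis, so the corresponding GCE reduces to a CE---the GCE for $k\in S$ is independent exactly when $k\notin T$. Hence this model contributes $|S\setminus T|=B-|S\cap T|$ independent GCEs, a number that, unlike the uniform $M-h$ of Theorem~\ref{THM:mainM}, depends on the overlap of $S$ with $T$ and must therefore be averaged over the placement of the distinct queues.

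Next I would carry out that averaging. Summing over all $h$-subsets $T\subseteq\{1,\ldots,M\}$ and using $\sum_{|T|=h}|S\cap T|=B\binom{M-1}{h-1}$ (each of the $B$ elements of $S$ lies in $\binom{M-1}{h-1}$ such subsets), Pascal's rule gives
\begin{equation*}
\sum_{|T|=h}\bigl(B-|S\cap T|\bigr)=B\Bigl(\binom{M}{h}-\binom{M-1}{h-1}\Bigr)=B\binom{M-1}{h}.
\end{equation*}
Weighting by the $\binom{l-1}{l-h}$ ways of distributing $l$ added queues among the $h$ fixed distinct ones and applying the Vandermonde convolution \cite{Coh78}, the total number of independent GCEs collapses to $B\binom{M+l-2}{l}$, which correctly specializes to $M\binom{M+l-2}{l}$ and $\binom{M+l-2}{l}$ at $B=M$ and $B=1$, matching the counts of the two theorems.

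Finally I would substitute this count into the over-determination condition to obtain
\begin{equation*}
B\binom{M+l-2}{l}+\binom{M+l-2}{l-1}(M+R-1)\geq\binom{M+l-1}{l}R,
\end{equation*}
and simplify exactly as in the two proofs: expanding the right-hand side by ${n\choose k}={n-1\choose k}+{n-1\choose k-1}$ and using $\binom{M+l-2}{l-1}/\binom{M+l-2}{l}=l/(M-1)$ reduces the inequality to $l\geq R-B$. Combined with the standing requirement $l\geq 1$, this yields the threshold $l=\max\{1,R-B\}$ (the intended reading of the stated $\min$), so that $B$ GCEs lower the basis level by up to $B$ as claimed. The only genuinely new ingredient, and hence the main obstacle, is the averaging in the third step: because a fixed family of $B$ GCEs interacts with each model according to $|S\cap T|$, the clean ``$M-h$ per model'' bookkeeping of Theorem~\ref{THM:mainM} must be replaced by the overlap identity above; once that identity is established, everything else is routine and runs in parallel with the two theorems.
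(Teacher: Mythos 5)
Your proof is correct and follows exactly the route the paper indicates for this corollary: it redoes the counting argument of Theorems~\ref{THM:mainM} and~\ref{THM:main1} for arbitrary $B$, and your overlap identity yields $B\binom{M+l-2}{l}$ independent GCEs, which correctly interpolates between the counts $\binom{M+l-2}{l}$ (at $B=1$) and $M\binom{M+l-2}{l}$ (at $B=M$) of the two theorems before substitution into the same over-determination inequality, giving $l\geq R-B$. You are also right that the corollary's stated $\min\{1,R-B\}$ must be read as $\max\{1,R-B\}$ to be consistent with the theorems.
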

\noindent The next section investigates the computational implications of the last result.

\section{Computational Complexity}
\label{SEC:complexity}
Corollary~\ref{COR:impact} enables the evaluation of the optimal choice of the \emph{branching factor} $B$ as a function of the model size. In practice, we are interested to understand when the additional recursions implied by a branching factor $B$ give an overhead that is less that the savings implied by the reduction of the basis level from $l=R$ of the original MoM to $l=\min\{1,R-B\}$ of MoM with GCEs.

We first observe that if $B$ GCEs are used in the MoM linear system, then the basis $\vec V^{-k}_{l}(\vec N)$ in (\ref{EQU:dcmomrec}) is computed recursively from $B$ bases of models with a queue less. These models have $M-1$ queues, thus the branching factor in this case is upper bounded by $B\leq M-1$. That is, the maximum number of GCEs added to the linear system changes according to the distance $d$, $d=0,\ldots,{M-1}$, in the recursion tree from the root (i.e., the original model). For $d=0,\ldots,B-1$, only up to $d$ GCEs can be added to the linear system, while for distances $d=B,\ldots,M$ we can always add $B$ GCEs. This can be seen immediately from Figure~\ref{FIG:singlevsmultiple}, where the number of GCE equations instantiated for a model with three queues are three ($k=1$, $k=2$, and $k=3$), two for a model with two queues, and they decrease progressively during the recursion.

Starting from the previous consideration, we analyze below the computational complexity of MoM with GCEs for the two limit cases $B=1$ and $B=M$, and provide discussion about the intermediate cases $1<B<M$ at the end of this subsection.

\begin{figure*}[t]
 \centering
    \includegraphics[width=\textwidth]{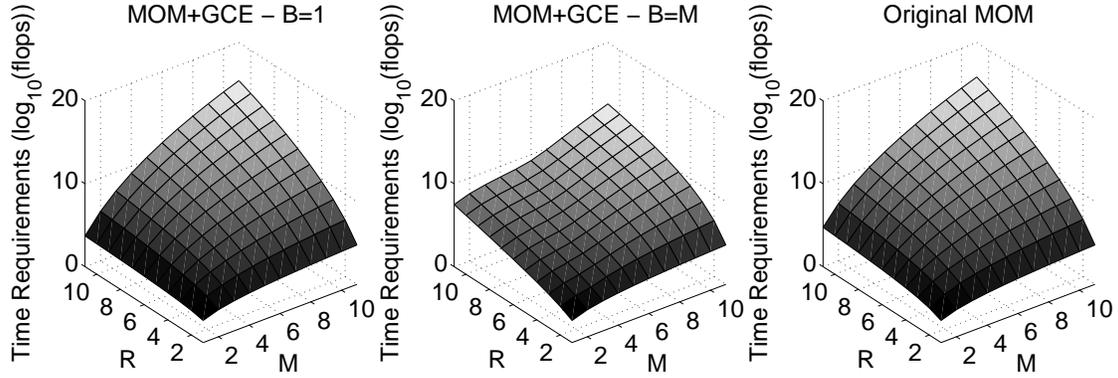}\\
  \caption{\footnotesize Time requirements of MoM and the divide-and-conquer MoM for different number of queues $M$ and number of service classes $R$. All queues are assumed distinct. The results indicate that assuming a branching level $B=M$ is far superior to $B=1$, unless a small number of queues is considered in the model ($M\leq 4$). The total population in the network is set to $N=100$.}
  \label{FIG:timerequirements}
\end{figure*}

{\em Time Requirements}. If $B=1$, then $V^{-k}_{l}(\vec N)$ is computed by $M$ recursions. During the $d$th recursive step $d=0,\ldots,M-1$, the model has $M-d$ queues; the basis is always of level $l=R-1$ for all steps. Assuming quadratic costs in the solution of the linear system, e.g., using a method like the Wiedemann algorithm, we have that the time for computing $\vec V(\vec N)$ from $\vec V(\vec N-\vec 1_R)$ grows as
\begin{equation}
\label{EQU:costB1}
\sum_{d=0}^{M-1} \left({M-d+R-2\choose R-1}R\right)^2 S^d_{exact}, 
\end{equation}
where the term between parenthesis is the coefficient matrix order in (\ref{EQU:dcmomrec}) and
$$S^d_{exact}\approx (N\log(M-d+N)){M-d+R-2\choose R-1}R$$
is the overhead of exact algebra for a model with $M-d$ queues and assuming that the linear system solver uses multiprecision arithmetic \cite{Cas06b}. In the expression (\ref{EQU:costB1}) we have ignored the exact number of iterations of the solution algorithm and thus the expression may be regarded as a cost per iteration of the linear system solver.

In the case where we use all possible GCEs, it is $B=M$ at the first recursive step, then $B=M-1$ at the second step, and $B=M-d$ at the $d$th recursive step\footnote{This observation holds true under the assumption that the model is composed initially by queues that have all multiplicity $m_k=1$, i.e., which are all distinct. The case of models with replicated queues has more favorable computational costs if the total number of queues (including the non-replicated ones) is the same, thus our analysis is a worst-case scenario when $M$ is interpreted as the total number of queues instead of the number of distinct ones.}. In addition, the level used at the $d$th step of the recursion is $l\equiv l(d)=\max\{1,R-M+d\}$, which is thus a function of the distance $d$ from the root of the recursion tree. Following these observations, the time requirements grow as
\begin{equation*}
\sum_{d=0}^{M-1} {M\choose M-d}\left({M-d+l(d)-1 \choose l(d)} R\right)^2 S^d_{exact}
\end{equation*}
where $l(d)=\max\{1,R-M+d\}$ and the term ${M\choose M-d}$ accounts for the combinatorial branching of the recursion and is the number of all possible queueing network models with $M-d$ distinct queues chosen among the initial $M$. For example, when $M<R$
\begin{equation*}
\sum_{d=0}^{M-1} {M\choose M-d}\left({R-1 \choose R-M+d} R\right)^2 S^d_{exact}
\end{equation*}
which is significantly smaller than (\ref{EQU:costB1}) since the binomial coefficient does not longer depend on the sum of $M$ and $R$.

Similarly to the case $B=1$, the time requirements expression is a cost per solver iteration and the term raised to square is the linear system order. Compared to the above expressions, the original MoM algorithm has a time requirement per iteration of
\begin{equation}
\label{EQU:costMOM}
\left({M+R-1\choose R}R\right)^2 S^d_{exact}.
\end{equation}

Figure \ref{FIG:timerequirements} quantifies the savings per solver iteration of the new algorithm for $B=1$ and $B=M$ compared to the costs of the original MoM. Since the costs are dependent on $M$, $R$, and the population size $N$, we simplify the evaluation and consider the variation of $M$ and $R$ under a quite large $N=100$. The cost surfaces indicate that the algorithm with $B=M$ is typically the most efficient except for very low values of $M$ where it is much more expensive than $B=1$ and the original MoM, although the cost per iteration remains quite small. Overall, the savings of the $B=1$ case are quite limited compared to the original MoM, while massive cost reduction is achieved with the multi-branched case $B=M$. This is quite counter-intuitive, since one would at first expect that the wide recursion tree in Figure \ref{FIG:singlevsmultiple}(b) is a major source of computational cost compared to the linear recursive structure in Figure \ref{FIG:singlevsmultiple}(a). Yet, Figure \ref{FIG:timerequirements} indicates that, for multiclass models that can be solved in acceptable times with commonly-available hardware, the cost of the combinatorial branching in Figure \ref{FIG:singlevsmultiple}(b) is not yet a performance bottleneck and it is justified by the massive computational saving of the basis size reduction.

\begin{figure*}[t]
 \centering
    \includegraphics[width=\textwidth]{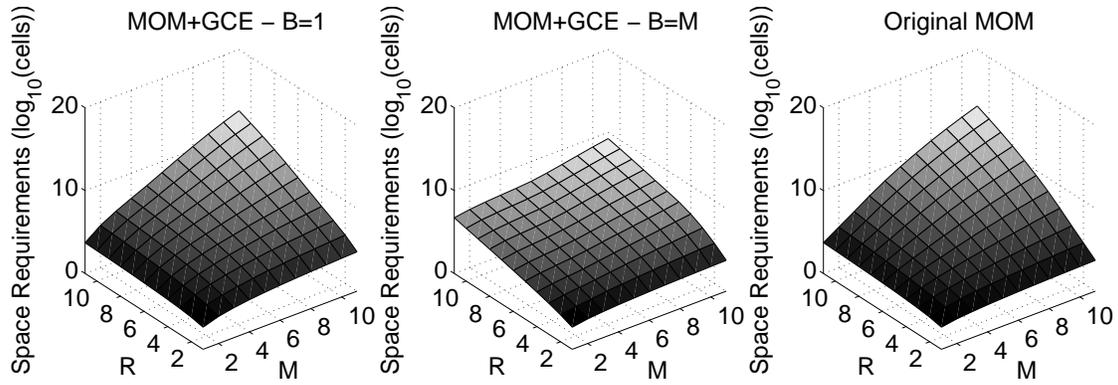}\\
  \caption{\footnotesize Space requirements of MoM and the divide-and-conquer MoM for different number of queues $M$ and number of service classes $R$. The interpretation of the results is qualitatively similar to the one for the time requirements in Figure~\ref{FIG:timerequirements}, with the best branching level being $B=M$ unless the number of queues $M$ is small. The total population in the network is set to $N=100$.}
  \label{FIG:spacerequirements}
\end{figure*}

{\em Space Requirements}. The space requirement of the case $B=1$ is upper bounded by the cost of storing the linear system (\ref{EQU:dcmomrec}) in memory when it is largest, i.e., for the original model with $M$ queues. This is approximately given by
\begin{equation}
2\left({M+R-2\choose R-1}R\right)^2+3\left({M+R-2\choose R-1}R\right)S^{R-1}_{nc}.
\end{equation}

The evaluation of memory requirements in the case $B=M$ is similar, but requires to take into account the width of the multi-branched recursion tree, since all basis vectors for models with $k-1$ queues should be available before the evaluation of models with $k$ queues. Thus, the memory occupation is
\begin{multline}
\max_{d=1,\ldots,M} {M\choose M-d} \Biggl(2\left({M-d+R-2\choose R-1}R\right)^2\\+2\left({M-d+R-2\choose R-1}R\right) S_{nc}\Biggr),
\end{multline}
where the first term is the cost of storing ${\bf A}(\vec N)$ and ${\bf B}(\vec N)$ for the currently evaluated linear system, while the second term accounts for the basis for populations $\vec N$ and $\vec N-\vec 1_R$ of all models at distance $d$ from the root of the recursion.

Finally, the computational costs of the original MoM are given by \cite{Cas06b}
\begin{equation}
2\left({M+R-1\choose R}R\right)^2+3\left({M+R-2\choose R-1}R\right)S^{R}_{nc},
\end{equation}
which is quite similar to the cost of the case $B=1$.

The comparison of the space requirements of the three different methods is shown in Figure~\ref{FIG:spacerequirements} for different values of $M$ and $R$; we set again the total population to $N=100$. Results are qualitatively similar to the time requirement case: the GCE equations provide the largest savings in space requirements compared to the original MoM only if $B=M$. The case $B=1$ is $1-2$ orders of magnitude faster then the original MoM for models with few queues ($M\leq 4$), while as $M$ increases the algorithm with $B=M$ scales much better. In particular, for the most challenging model with $M=11$ and $R=11$, the computational saving of the modified algorithm with $B=M$ is about four orders of magnitude over the original MoM, thus making the case that the inclusion of the GCE equations is highly-valuable also for the space requirements.

{\em Intermediate cases $1<B<M$}. Following the result in Corollary \ref{COR:impact} it is immediately found that the size of the basis for intermediate choices of the branching level $B$ is always bounded by the choices $B=1$ and $B=M$ and computational requirements are typically within those of these limit cases. For example, assume that $B$ queues are chosen for removal and the multi-branched recursion is operated only on these queues such that the recursion is terminated by solving with the original MoM models with $M-B$ queues. In this case, we have found that the computational costs of the choices $B=1$ and $B=M$ are always better than these intermediate cases, unless $M-B=1$. Yet, in this more favorable cases, the costs of the intermediate choice of $B$ have the same order of magnitude of the best between $B=1$ and $B=M$, therefore the savings of these intermediate cases seem marginal and do not motivate a specialized implementation of the algorithm. As a result, we believe that the multi-branched recursion approach is best implemented with a choice $B=M$ which provides the biggest savings with respect to the original MoM on the largest number of choices of $M$ and $R$.

%

{\em Comparison with MVA Algorithm}. As a final remark, regardless of the branching level $B$ used, the computation of $\vec V(\vec N)$ from $\vec V(\vec 0)$ has an $O(N^2\log N)$ time complexity and an $O(N\log N)$ space complexity as the total population $N$ grows. Since MVA is $O(N^R)$ in time and space complexities, it is immediately clear that for sufficiently large populations MoM is always faster and less memory consuming than MVA. Savings are obtained by MoM already for populations composed by few tens of jobs\cite{Cas06b}. Therefore, since the original MoM is already much more scalable than MVA, it is an immediate consequence that the generalized MoM with GCEs, which always performs better than MoM, will be always several orders of magnitude more efficient than MVA or other methods such as RECAL or LBANC. We point to \cite{Cas06b} for a comparison of the original MoM with these methods supporting the statements in this subsection.


\section{Conclusions}
\label{SEC:conclusions}
In this paper, we have presented a generalization of the Method of Moments (MoM), a recently proposed algorithm for the exact analysis of multiclass queueing network models which are widely used in capacity planning of computer systems and networks \cite{Cas06b,Cas09}. We have integrated in the MoM equations also the recursive formula used in the Convolution Algorithm~\cite{Buz73,ReiK75b}, here called the generalized convolution equation (GCE). We have shown that using the GCE in MoM significantly changes the structure of its recursion leading to the evaluation of models with different number of queues and which can be solved much more efficiently than the larger models considered by MoM. As a result, the computational costs in time and space of the generalized algorithm are several orders of magnitude smaller than the original MoM recursion.

As a possible extension of this work, we believe that the Convolution Algorithm equation considered in this paper could benefit also the Class-Oriented Method of Moments (CoMoM) algorithm presented in \cite{Cas09}. This algorithm can be seen as the dual of the MoM algorithm, where the basis of normalizing constants considered in the recursion is defined in such a way that a different tradeoff between number of queues and classes is considered and this favors the solution of models with many classes compared to the original MoM. The generalization of CoMoM with GCEs could possibly further enhance its scalability on models with many classes.

\bibliographystyle{plain}
\footnotesize

\appendix

\end{document}